\def\y{\mathbf{y}}
\def\z{\mathbf{z}}
\def\x{\mathbf{x}}
\def\t{\mathbf{t}}
\def\g{\mathbf{g}}
\def\M{\mathbf{M}}
\def\A{\mathbf{A}}
\def\I{\mathbf{I}}
\def\Q{\mathbf{Q}}
\newtheorem{theorem}{Theorem}[section]
\renewenvironment{abstract}{\bf\small {\em\ Abstract---}}{}
\title{BSGD-TV: A parallel algorithm solving total variation constrained image reconstruction problems}
\author{Yushan Gao$^1$, Thomas Blumensath$^2$.\\
  \footnotesize $^1$ University of Southampton UK.\ $^2$University of Southampton UK.\
  } \date{\empty} 
\begin{document}

\maketitle

\begin{abstract} 
We propose a parallel reconstruction algorithm to solve large scale TV constrained linear inverse problems. We provide a convergence proof and show numerically that our method is significantly faster than the main competitor, block ADMM. 
\end{abstract}

\section{Introduction}
\label{sec:introduction}
Our algorithm is inspired by applications in computed tomography (CT), where the efficient inversion of large sparse linear systems is required \cite{guo2016convergence}: 
$\y\approx \A\x_{true}$, where $\x_{true} \in \mathbb{R}^c$ is the vectorised version of a 3D image that is to be reconstructed and $\A\in \mathbb{R}^{r\times c}$ is an X-ray projection model. $\y\in \mathbb R^r$ are the vectorised noisy projections. We are interested in minimizing $f(\x)+g(\x)$, where $f(\x)$ is quadratic and  $g(\x)$ is convex but non-smooth  \cite{sidky2008image}. For example:
\begin{equation}
\x^{\star}=\arg \min_{\x} \underbrace{(\y-\A\x)^T(\y-\A\x)}_{f(\x)}
+\underbrace{2\lambda \text{TV}(\x)}_{g(\x)},
\label{Equ}
\end{equation}
where $\lambda$ is a relaxation parameter and $\text{TV}(\x)$ is the total variation (TV) of the image $\x$. For 2D images, it is defined as: 
\begin{equation}
\text{TV}(\x)=\sum_{s,t}\sqrt{(x_{s,t}-x_{s-1,t})^2+(x_{s,t}-s_{s,t-1})^2},
\end{equation}
where $x_{s,t}$ is the intensity of image pixel in row $s$ and column $t$.

We recently introduced a parallel reconstruction algorithm called coordinate-reduced stochastic gradient descent (CSGD) to minimize  quadratic objective function $f(\x)$ \cite{gao2017distributed}. We here introduce a slight modification by simplifying the step length calculation and show that the  modified version converges to the least squares solution of $f(\x)$. We will call this modified algorithm block stochastic gradient descend (BSGD). We combine BSGD with an iterative shrinkage/thresholding (ISTA-type) step \cite{beck2009fast} to solve Eq.\ref{Equ}. The new algorithm, called BSGD-TV, is compared with block ADMM-TV \cite{parikh2014block}, an algorithm sharing the same parallel architecture and the same communication cost. Simulation results show that BSGD-TV is significantly faster as it requires significantly fewer matrix vector products compared to block ADMM-TV.

\section{BSGD-TV Algorithm}
\label{sec:first-section}
\subsection{Algorithm description}
BSGD works on blocks of $\x$ and $\y$. We assume that $\A$ is divided into $M$ row blocks and $N$ column blocks. Let $\{\x_{J_j}\}_{j=1}^N$ and $\{\y_{I_i}\}_{i=1}^M$ be sub-vectors of $\x$ and $\y$ and let $\A_{I_i}^{J_j}$ be the associated block of matrix $\A$ so that $\y_{I_i} \approx \sum_{j=1}^N \A_{I_i}^{J_j} \x_{J_j}$. Our algorithm splits the optimization into blocks, so that each parallel process only computes using a single block $\x_{J_j}$ and $\y_{I_i}$ for some $J_j\in \{J_j\}_{j=1}^N$ and $I_i\in \{I_i\}_{i=1}^M$. Each process also requires an estimate of the current residual $\mathbf{r}_{I_i}$ and computes a vector $\z^j_{I_i}$, both of which are of the same size as $\y_{I_i}$. The main steps (ignoring initialisation) are described in Algo.\ref{BSGDIJ}.
\begin{algorithm}  
  \caption{BSGD-TV algorithm}
   \label{BSGDIJ}  
    \begin{algorithmic}[1]  
     \FOR{$epoch =1,2,\cdots$}
        \FOR{$M\times N$ pairs $\{i,j\}$ drawn randomly  without replacement \textbf{in parallel} } 
    			\STATE $\hat{\mathbf{g}}_{J_j}^i=2(\A_{I_i}^{J_j})^T\mathbf{r}_{I_i}$
       			\STATE $\z_{I_i}^j=\A_{I_i}^{J_j}{\x}_{J_j}$
   	\ENDFOR
   	\STATE $\mathbf{r} = \y-\sum_j \z^j$
   	\STATE $\g = \sum_i \hat{\mathbf{g}}^i$
   	\STATE $\x=\x+\mu \g$ (constant $\mu>0$)
   	\STATE $\x=\arg \min_{\t} \Vert\t-\x\Vert^2+2\mu\lambda \text{TV}(\t)$
 \ENDFOR
\end{algorithmic}  
\end{algorithm}
To effectively solve line 9, we here adopt method proposed in \cite{beck2009fast2}. 

\subsection{BSGD Convergence}
BSGD without the proximal operator ($\lambda$=0 in Eq.\ref{Equ}), and with parallelization over all subsets can be shown to converge to the least squares solution. To see this, we write the update of $\x$ as
\begin{equation}
\begin{aligned}
\x^{k+1}&= \x^k+\mu\g^k\\
        &=\x^k+ 2\mu \A^T(\y-\sum_{j=1}^N(\z^j)^{k-1})\\
        &=\x^k+2\mu \A^T(\y-\A\x^{k-1})
\end{aligned}
\label{TotalEq}
\end{equation}
In this form, BSGD is similar to gradient descent but uses an old gradient. Assume that there is a fixed point $\x^{\star}$ defined by $\x^{\star}=\x^{\star}+2\mu \A^T(\y-\A\x^{\star})$. Note that the fixed point condition implies that, if $\A$ is full column rank, then $\x^{\star}=(\A^T\A)^{-1}\A^T\y$. Thus the fixed point is the least squares solution. 
Theorem  \ref{thm1} states the conditions on parameter $\mu$ for convergence when all subsets $\{I_i\}_{i=1}^M$ and $\{J_j\}_{j=1}^N$ are selected within one epoch. 
\begin{theorem}
\label{thm1}
If $\mu \in (0,\frac{1}{2u_{max}})$, where $u_{max}$ is the maximum eigenvalue of $\A^T\A$ and assume $\A$ is full column rank, then BSGD without the TV operator ($\lambda$=0 in Eq.\ref{Equ}), and with parallelization over all subsets  converges to the least squares solution $\x^\star$.
\end{theorem}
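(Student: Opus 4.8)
The plan is to track the evolution of the error vector $\e^k := \x^k - \x^\star$ and show that it decays to zero. Starting from the recurrence in Eq.~\ref{TotalEq} and subtracting the fixed-point relation $\x^\star = \x^\star + 2\mu\A^T(\y-\A\x^\star)$, the constant and the $\y$ terms cancel, leaving the second-order linear recurrence
\begin{equation}
\e^{k+1} = \e^k - 2\mu\,\M\,\e^{k-1},
\end{equation}
where $\M := \A^T\A$. Since $\A$ has full column rank, $\M$ is symmetric positive definite with eigenvalues $u_i \in (0, u_{max}]$, so convergence of $\x^k$ to $\x^\star$ is equivalent to $\e^k \to \mathbf{0}$. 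Because the scheme uses the \emph{delayed} gradient $\A^T(\y-\A\x^{k-1})$, this is genuinely a second-order rather than first-order iteration, and that is what makes the stability analysis nontrivial.

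First I would diagonalize. As $\M$ is symmetric we may write $\M = \Q\Lambda\Q^T$ with $\Q$ orthogonal and $\Lambda$ diagonal, and set $\tilde\e^k := \Q^T\e^k$. The recurrence then decouples completely into $c$ independent scalar recurrences, one per eigenvalue:
\begin{equation}
\tilde{e}_i^{k+1} = \tilde{e}_i^k - 2\mu u_i\,\tilde{e}_i^{k-1}.
\end{equation}
Each such scalar recurrence sends $\tilde{e}_i^k \to 0$ for arbitrary initial data precisely when both roots of its characteristic polynomial $\rho^2 - \rho + 2\mu u_i = 0$ lie strictly inside the unit disk, the roots being $\rho_\pm = \tfrac{1}{2}\bigl(1 \pm \sqrt{1-8\mu u_i}\,\bigr)$.

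The heart of the argument, and the step I expect to be the main obstacle, is the root-location analysis, because it splits into two regimes with different binding constraints. When $8\mu u_i \le 1$ the two roots are real; using that their product $2\mu u_i$ and sum $1$ are both positive one checks that both lie in $(0,1)$ for every admissible $\mu>0$, so this regime is automatically stable. When $8\mu u_i > 1$ the roots are a complex-conjugate pair with $|\rho_\pm|^2 = 2\mu u_i$ (their product), so stability requires $2\mu u_i < 1$. The complex regime is therefore the restrictive one, and it yields the per-eigenvalue condition $\mu < \tfrac{1}{2u_i}$; imposing it simultaneously over all $i$ gives the single condition $\mu < \tfrac{1}{2u_{max}}$, which is exactly the hypothesis of the theorem.

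To conclude rigorously I would recast the recurrence in companion form, whose eigenvalues are precisely the $\rho_\pm$ collected over all $u_i$; under $\mu < \tfrac{1}{2u_{max}}$ its spectral radius is strictly below one, so $\e^k \to \mathbf{0}$ geometrically and hence $\x^k \to \x^\star$. Finally, invoking the discussion preceding the theorem, the fixed point satisfies the normal equations and equals $\x^\star = (\A^T\A)^{-1}\A^T\y$, i.e.\ the least squares solution, which completes the claim.
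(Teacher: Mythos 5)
Your proposal is correct and follows essentially the same route as the paper: both reduce the delayed-gradient iteration to a second-order linear recurrence whose stability is governed by the roots $\frac{1\pm\sqrt{1-8\mu u}}{2}$ of $\rho^2-\rho+2\mu u=0$ for each eigenvalue $u$ of $\A^T\A$, and both identify the complex-root regime with $\vert\rho\vert^2=2\mu u<1$ as the binding constraint, yielding $\mu<\frac{1}{2u_{max}}$. Your decoupling into scalar recurrences via orthogonal diagonalization is just a repackaging of the paper's direct eigenvalue computation for the block companion matrix, and if anything you are slightly more careful in the real-root regime, which the paper asserts without verification.
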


 

\begin{proof}[Proof of Theorem \ref{thm1}]
 The iteration  in Eq.\ref{TotalEq} can be written as
\begin{equation}
\begin{aligned}
\begin{bmatrix}
\x^{k}\\
\x^{k+1}
\end{bmatrix}=&\begin{bmatrix}
\mathbf{0} \quad \I\\
-2\mu \A^T\A \quad \I
\end{bmatrix}
\begin{bmatrix}
\x^{k-1}\\
\x^{k}
\end{bmatrix}+ \begin{bmatrix}
\mathbf{0}\\
2\mu \A^T\y
\end{bmatrix}\\
&=\M\begin{bmatrix}
\x^{k-1}\\
\x^{k}
\end{bmatrix}+\begin{bmatrix}
\mathbf{0}\\
2\mu \A^T\y
\end{bmatrix}.
\end{aligned}
\end{equation}
Standard convergence results for iterative method of this type with fixed $\M$ require the spectral radius of $\M$ to be less than 1 \cite{saad2003iterative}. 
Let $v$ be any (possibly complex valued) eigenvalue of $\M$, i.e. $v$ satisfies det$(\M-v\I)=0$. It is straightforward to obtain:
\begin{equation}
\text{det}\left( \begin{bmatrix}
-v\I \quad \I\\
-2\mu \A^T\A \quad I-v\I
\end{bmatrix}\right)=\text{det}(\A^T\A-\frac{v-v^2}{2\mu}\I)=0
\label{det1}
\end{equation}
By Eq.\ref{det1}, we see that eigenvalues $u$ of $\A^T\A$  correspond to 

 \begin{equation}
u=\frac{v-v^2}{2\mu},
 \end{equation}
Eigenvalues of $\M$ are then given by 
 \begin{equation}
 \begin{aligned}
 v_1=\frac{1+\sqrt{1-8\mu u}}{2},
 v_2=\frac{1-\sqrt{1-8\mu u}}{2}.
 \end{aligned}
 \end{equation}

As the spectral radius of $\M$ corresponds to the largest magnitude of the eigenvalues of $\M$, we require $\vert v_1\vert<1$ and $\vert v_2\vert<1$ to ensure the convergence of the algorithm.
 $\A^T\A$ is a positive definite matrix and thus has only positive, real valued eigenvalues $u$. 
Thus $v_1$ and $v_2$ are real valued if $0< \mu \leq\frac{1}{8u}$ and complex valued if $\mu$ is $\frac{1}{8u} < \mu$. In the complex case, it is easy to see that $\vert v_1\vert<1$ and $\vert v_2\vert<1$ if $\mu <\frac{1}{2u}$, implying that the acceptable range of $\mu$ is $(0,\frac{1}{2u_{max}})$. 
\end{proof}
 
Theorem \ref{thm2} gives a general convergence condition when applying BSGD-TV to solve Eq.\ref{Equ}.

\begin{theorem}
\label{thm2}
If the constant step length $\mu$ satisfies
\begin{equation}
f(\x^{k+1})<f(\x^k)+(\x^{k+1}-\x^k)^T\nabla f(\x^{k-1})+\frac{1}{2\mu}\|\x^{k+1}-\x^k\|^2,
\label{con}
\end{equation}
where $f(\x)$ is defined in Eq.\ref{Equ}, then BSGD-TV converges to the optimal solution of Eq.\ref{Equ}.
\end{theorem}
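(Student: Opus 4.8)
The plan is to read Algorithm~\ref{BSGDIJ} with $\lambda>0$ as a proximal--gradient (ISTA-type) step whose smooth gradient is evaluated at the delayed iterate $\x^{k-1}$ instead of $\x^{k}$, and to adapt the Beck--Teboulle argument \cite{beck2009fast} to this one-step delay. Writing $F=f+g$ with $g(\x)=2\lambda\,\text{TV}(\x)$, lines~8--9 combine into
\[ \x^{k+1}=\arg\min_{\t}\ \|\t-(\x^{k}-\mu\nabla f(\x^{k-1}))\|^2+2\mu\lambda\,\text{TV}(\t). \]
The first step is to record the first-order optimality condition of this proximal subproblem, which produces a subgradient $\eta\in\partial g(\x^{k+1})$ with $\eta=-\tfrac{1}{\mu}(\x^{k+1}-\x^{k})-\nabla f(\x^{k-1})$ (up to the fixed scalar constants of line~9). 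This identity is the only place where the delayed structure of the algorithm enters the analysis.

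Next I would assemble a single ``descent-with-delay'' inequality from three pieces: the hypothesis \eqref{con}, used as the quadratic upper bound on $f(\x^{k+1})$; convexity of $f$ evaluated at the delayed point $\x^{k-1}$; and convexity of $g$ together with the subgradient $\eta$ above. Evaluated at the comparison point $\x^{k}$ and simplified, the terms linear in $\nabla f(\x^{k-1})$ recombine with $f(\x^{k-1})-f(\x^{k})$ into a Bregman divergence, giving
\[ F(\x^{k})-F(\x^{k+1})\ \ge\ \tfrac{1}{2\mu}\|\x^{k+1}-\x^{k}\|^2-D_{k-1,k}, \]
where $D_{k-1,k}=f(\x^{k})-f(\x^{k-1})-(\x^{k}-\x^{k-1})^{T}\nabla f(\x^{k-1})\ge0$. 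Because $f$ is quadratic with Hessian $2\A^{T}\A$, this remainder equals $\|\A(\x^{k}-\x^{k-1})\|^2\le u_{max}\|\x^{k}-\x^{k-1}\|^2$.

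The main obstacle is exactly this delay remainder $D_{k-1,k}$: unlike plain ISTA, the old gradient breaks monotonicity of $F(\x^{k})$, and a naive telescoping fails. To absorb it I would use the Lyapunov function $\Phi_{k}=F(\x^{k})+u_{max}\|\x^{k}-\x^{k-1}\|^2$ and deduce from the previous display that
\[ \Phi_{k+1}\ \le\ \Phi_{k}-\big(\tfrac{1}{2\mu}-u_{max}\big)\|\x^{k+1}-\x^{k}\|^2. \]
This is where the step-size bound of Theorem~\ref{thm1}, $\mu<\tfrac{1}{2u_{max}}$, re-enters and makes the coefficient strictly positive; the condition \eqref{con} is precisely what delivers the $\tfrac{1}{2\mu}$ term with the correct sign. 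Since $\Phi_{k}$ is bounded below ($f\ge0$ and $g\ge0$), it converges, hence $\sum_{k}\|\x^{k+1}-\x^{k}\|^2<\infty$, so $\|\x^{k+1}-\x^{k}\|\to0$ and $F(\x^{k})$ converges.

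Finally I would pass to the limit. The step sizes going to zero lets me take limits in the subgradient identity, and outer semicontinuity of $\partial g$ yields $0\in\nabla f(\bar\x)+\partial g(\bar\x)$ at any accumulation point $\bar\x$, i.e.\ $\bar\x$ is a minimiser of the convex objective \eqref{Equ}. Under the full-column-rank hypothesis inherited from Theorem~\ref{thm1} the quadratic $f$ is strongly convex, so the minimiser $\x^{\star}$ is unique; combining this with the $\x^{\star}$-comparison version of the descent inequality---which is quasi-Fej\'er up to the summable remainders $D_{k-1,k}$ and $\|\x^{k+1}-\x^{k}\|^2$---upgrades the conclusion to convergence of the whole sequence $\x^{k}\to\x^{\star}$.
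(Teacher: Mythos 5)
Your proposal is correct, and it proves strictly more than the paper does; the overlap is only your first step. The paper's entire proof consists of defining the surrogate $\Q(\x,\x^k,\x^{k-1})=f(\x^k)+(\x-\x^k)^T\nabla f(\x^{k-1})+\frac{1}{2\mu}\|\x-\x^k\|^2+g(\x)$, observing that $\x^{k+1}$ minimises it so that $\Q(\x^{k+1},\x^k,\x^{k-1})<\Q(\x^k,\x^k,\x^{k-1})=f(\x^k)+g(\x^k)$, and invoking the hypothesis Eq.~\ref{con} to conclude $f(\x^{k+1})+g(\x^{k+1})<f(\x^k)+g(\x^k)$ --- after which convergence to the optimum is simply asserted. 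Monotone decrease of the objective alone does not yield that conclusion, so everything in your proposal past the descent inequality (summability of $\|\x^{k+1}-\x^k\|^2$, the subgradient limit via outer semicontinuity of $\partial g$, strong convexity for uniqueness and whole-sequence convergence) is genuinely new material that honestly completes the argument the paper only sketches. Two refinements: your detour through the delay remainder $D_{k-1,k}$ and the Lyapunov function $\Phi_k$ is avoidable, since adding Eq.~\ref{con} to the convexity inequality for $g$ at your subgradient $\eta$ makes the two terms linear in $\nabla f(\x^{k-1})$ cancel exactly, giving $F(\x^k)-F(\x^{k+1})\ge\frac{1}{2\mu}\|\x^{k+1}-\x^k\|^2$ with no remainder (equivalently, use the $\frac{1}{\mu}$-strong convexity of $\Q$ in its first argument); hence the bound $\mu<\frac{1}{2u_{max}}$ you import from Theorem~\ref{thm1} is unnecessary for sufficient decrease. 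By contrast, your full-column-rank assumption, though absent from the statement of Theorem~\ref{thm2}, is genuinely needed: without it $F$ may lack a unique minimiser and bounded level sets (TV vanishes on constant images), so the theorem as literally stated has a real gap that your added hypothesis repairs.
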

\begin{proof}[Proof of Theorem \ref{thm2}]
With parallelization over all subsets, BSGD-TV computes
\begin{equation}
\begin{aligned}
& \hat{\x}^{k+1}=\x^k - \mu \nabla f(\x^{k-1}) \\
& \x^{k+1}=\arg \min_{\x}\{2\mu g(\x) + \|\x-\hat{\x}^{k+1}\|\}.
\end{aligned}
\end{equation}
We define a function $\Q$ as
\begin{equation}
\begin{aligned}
\Q(\x,\x^{k},\x^{k-1})=&f(\x^k)+(\x-\x^k)^T \nabla f(\x^{k-1})\\
&+\frac{1}{2\mu}\|\x-\x^k\|^2+g(\x),
\end{aligned}
\end{equation}
where $\|\cdot \|^2$ is the squared $\ell_2$ norm.
The fact that
$
\arg \min_{\x}\{\Q(\x,\x^k,\x^{k-1})\}\equiv \x^{k+1}
$
means that:
\begin{equation}
\Q(\x^{k+1},\x^k,\x^{k-1})<\Q(\x^k,\x^k,\x^{k-1})\equiv f(\x^k)+g(\x^k)
\end{equation}
Finally, the definition of $\Q$ and the requirement on the step length $\mu$ in Eq.\ref{con},
mean that $f(\x^{k+1})+g(\x^{k+1})<\Q(\x^{k+1},\x^k,\x^{k-1})<f(\x^k)+g(\x^k)$ holds, so that BSGD-TV converges to the fixed point of Eq.\ref{Equ}.
\end{proof}

\section{Simulations}
We show experimentally that the method also converges when only a fraction $\alpha$ and $\gamma$ of subsets of $\{\x_{J_j}\}_{j=1}^N$ and $\{\y_{I_i}\}_{i=1}^M$ are randomly selected to calculate the $\g$ at each iteration.
The simulation geometry is shown in Fig.\ref{geo}.
\begin{figure}[htb]
\centering
\includegraphics[width=6cm,angle=0]{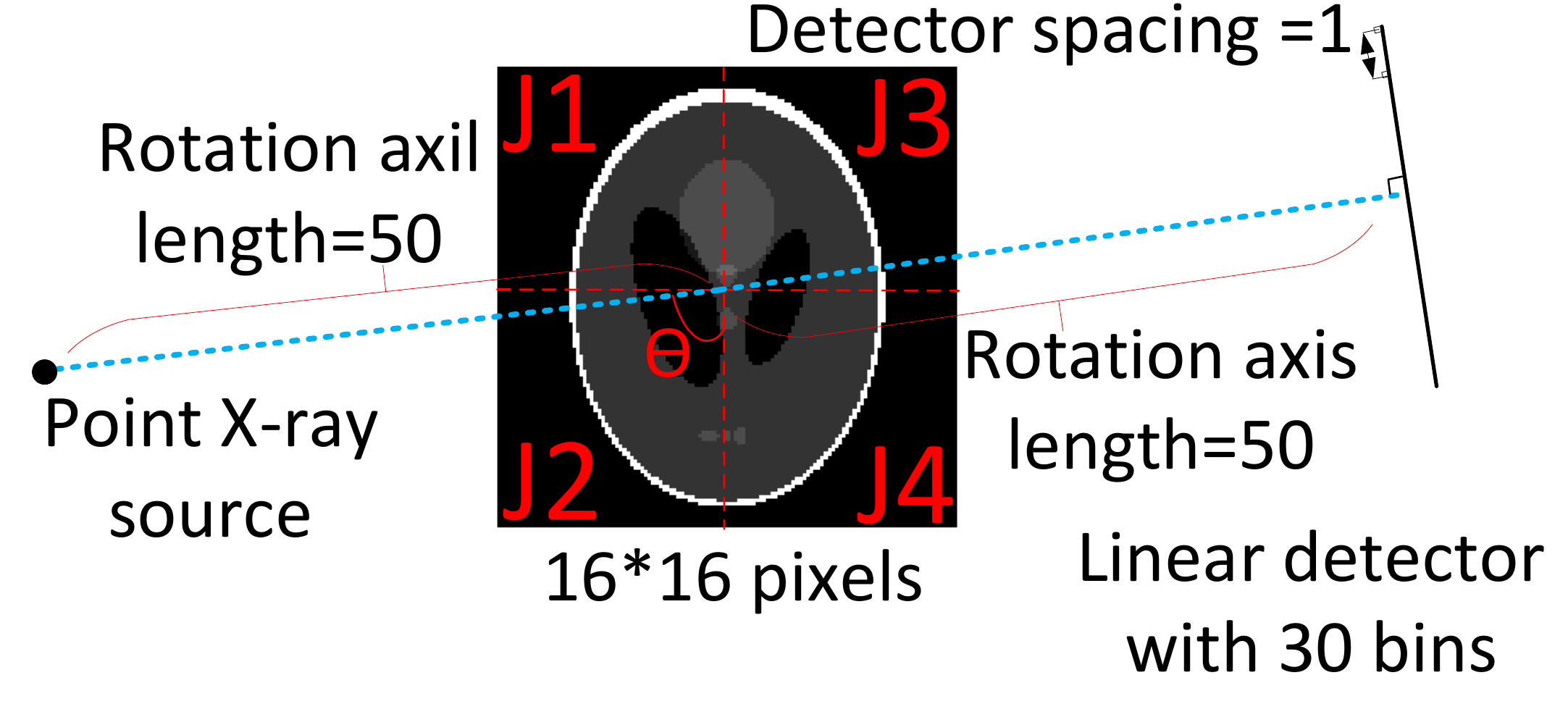}
\caption{Using a fan-beam x-rays geometry to scan a 2D Shepp-Logan phantom. Projections are taken  at $10^\circ$ intervals. The image $\x$ is partitioned into 4 subsets $\{J_j\}_{j=1}^4$ and the total projections are also partitioned into 4 subsets.
}
\label{geo}
\end{figure}
We add Gaussian noise to the projections so that the SNR of $\y$ is 17.7 dB. We define the $\textbf{relative error}$ as $\frac{\|\x_{dif}\|}{\|\x_{true}\|}$, where $\Vert\x_{dif}\Vert$ is the $\ell_2$ norm of the difference between reconstructed image vector and the original vector $\x_{true}$.  
Convergence is shown in Fig.\ref{figsubfig}a. We plot relative error against epochs, where an epoch is a normalised iteration count that corrects for the fact that the stochastic version of our algorithm only updates a subset of elements at each iteration. 

BSGD-TV and ADMM-TV are faster than ISTA in terms of epochs. However, ADMM-TV is significantly slower than BSGD-TV, because ADMM-TV requires matrix inversions at each iteration, while BSGD does not. Even when implementing ADMM-TV using as few conjugate gradient iterations per step as possible, as shown in Fig.\ref{figsubfig}b, BSGD-TV is more computationally efficient in terms of the number of required matrix vector multiplications \cite{gao2017distributed}. Compared to ISTA and GD, our block method allows these computations to be fully parallelised which would enable to reconstruct large scale CT reconstructions while the computation node have limited storage capacity.

\begin{figure}[htb] 
  \centering 
  \subfigure[]{ 
    \includegraphics[width=1.65in]{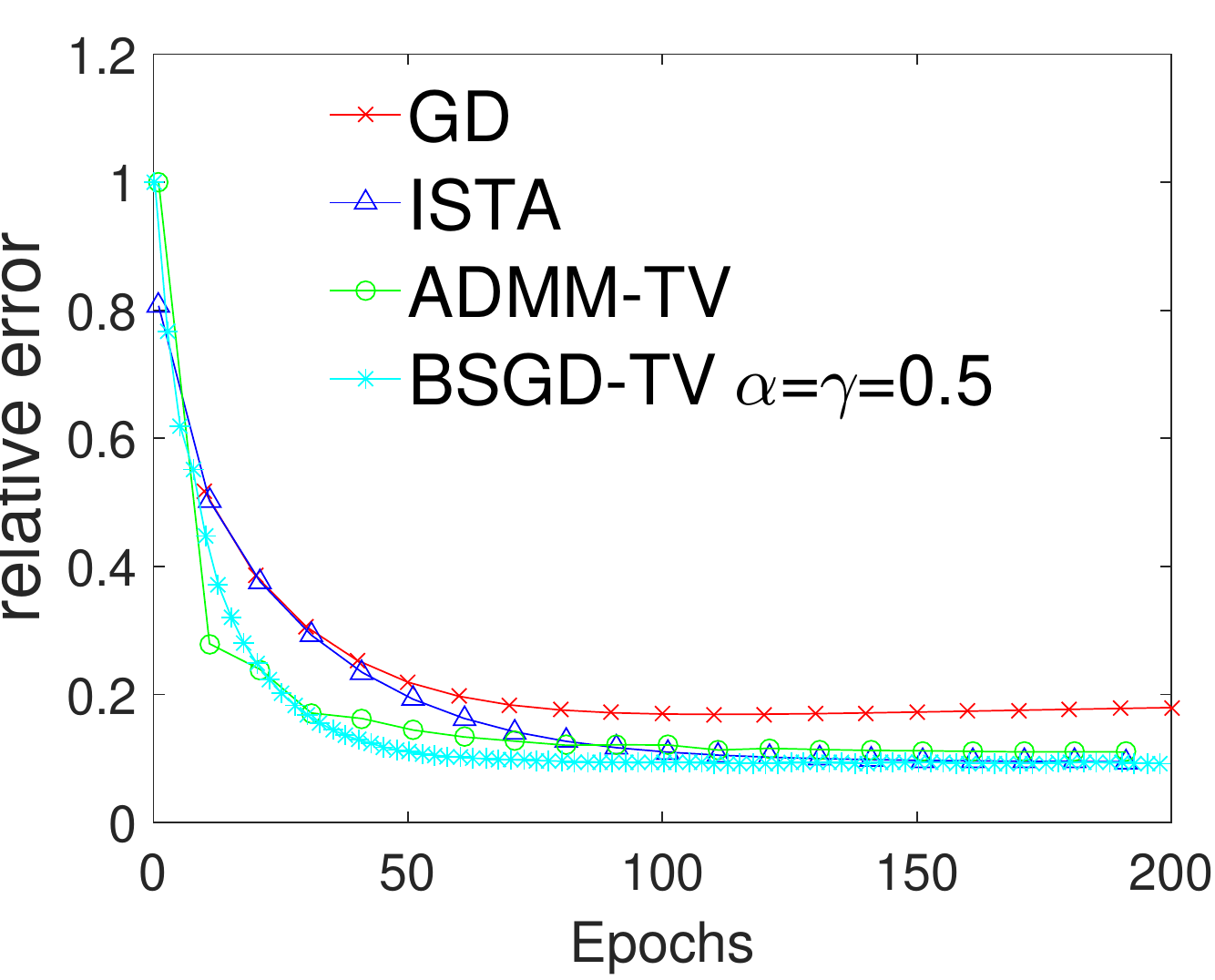} 
    
  } 
  \subfigure[]{ 
    \includegraphics[width=1.65in]{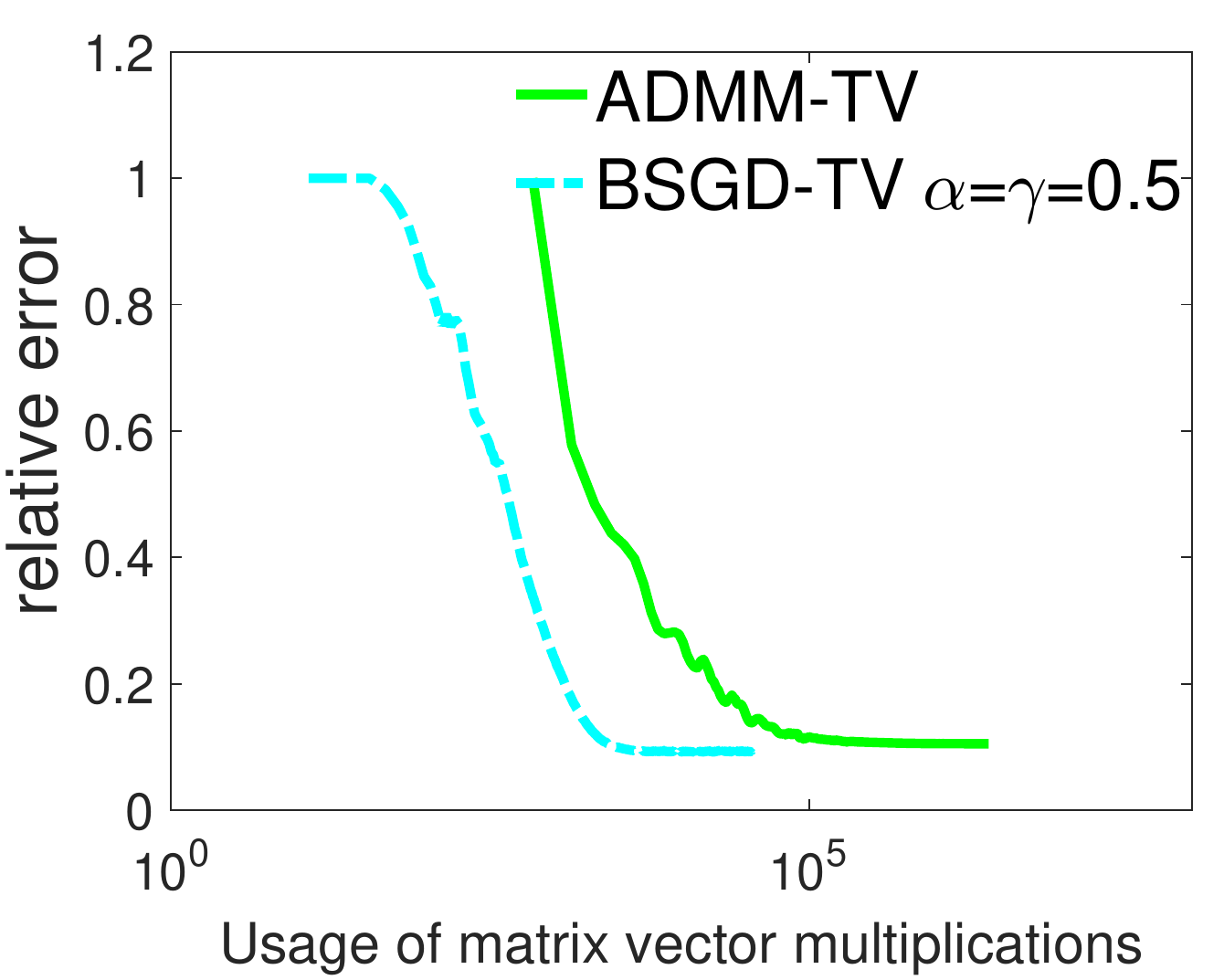} 
  } 
  \caption{The step length $\mu$ for ISTA, gradient descent (GD, which solves $f(\x)$ in Eq.\ref{Equ} ) and BSGD is $6e-4$ and $\lambda$ in Eq.\ref{Equ} is $0.1$. For BSGD and ADMM, $A$ is divided into $4\times 4$ sub-matrices while ISTA processes $\A$ as a whole. (a): Relative error vs. epoch. The high relative error of GD suggests the necessity of incorporating the TV norm.  (b):  BSGD-TV uses significantly fewer matrix-vector multiplications compared to ADMM-TV.}  
  \label{figsubfig} 
\end{figure}

%

\section{Conclusion}
\label{sec:second-section}

BSGD-TV is a parallel algorithm for large scale TV constrained CT reconstruction. It is similar to the popular ISTA algorithm but is specially designed for optimisation in distributed networks. The advantage is that individual compute nodes only operate on subsets of $\y$ and $\x$, which means they can operate with less internal memory. The method converges significantly faster than block-ADMM methods.   


\end{document}